\newcommand{\llll}[1] {\left #1}
\newcommand{\rrrr}[1] {\right #1}
\newcommand{\dddd}[2]{\frac{#1}{#2}}
\newcommand{\pppp}{\partial}
\newcommand{\aaaa}{\alpha}
\newcommand{\tttt}{\tau}
\newcommand{\vvvv}{\varphi}
\newcommand{\ssss}{\sigma}
\newcommand{\GGGG}{\Gamma}
\newcommand{\llllll}{\lambda}
\newtheorem{lem}{Lemma}
\begin{document}

\title {High Order Finite Difference Schemes on Non-Uniform Meshes for the  Time-Fractional Black-Scholes Equation}
\author{Yuri  M.   Dimitrov}
\author{Lubin G. Vulkov\\
FNSE, University of Rousse, Rousse 7017, Bulgaria\\
ymdimitrov@uni-ruse.bg, lvalkov@uni-ruse.bg}
\maketitle

\begin{abstract}
We construct a three-point compact finite difference scheme on a  non-uniform mesh for the time-fractional Black-Scholes equation. We show that for special graded meshes used in finance, the Tavella-Randall and the quadratic meshes the numerical solution has a fourth-order accuracy in space. Numerical experiments are discussed.
\end{abstract}
 \section{Introduction}
The Black-Scholes-Merton model for option prices is an important model in financial mathematics. Since its discovery in the early seventies, it has been widely used in practice and has been studied rigorously using analytical and computational methods.
The value of an option, denoted by $V$,  depends on the current market value of the underlying asset $s$, and the remaining time $t$ until the option expires:
$V=V(s,t)$. The    Black-Scholes equation (BS) is a backward-in-time parabolic equation \cite{BlackScholes}
\begin{equation}\label{BS}
LV:=\frac{\partial V} {\partial t } +\frac{1}{2}\sigma^{2} s^{2}
\frac{\partial^{2} V} {\partial s^{2} }  +
(r-d)s\frac{\partial V} {\partial s } -rV =0 ,
\end{equation}
where $\sigma$ is the annual volatility of the asset price, $r$ is the
risk-free interest rate, $d$ is the dividend yield and $T$ is the   expiry date ($t=0$ means "today"). Due to the complexity of the financial markets, a number of improvements and modifications to the model  have been proposed in order to improve the its accuracy depending on the state of the market.  The change in the option price with time in the fractional model for option prices is a fractional transmission system. This assumption implies that the total flux rate of the option price
${\overline{Y}}(s,t)$ per unit time from the current time $t$ to the expiry date $T$ and the option price
$V(s,t)$ satisfy
\begin{equation}\label{2}
\int_{t}^{T} {\overline{Y}} (s,t')dt' =s^{d_{f} - 1}\int_{t}^{T}H (t'-t)
[V(s,t')-V(s,T)] dt',
\end{equation}
where $H(t)$ is the transmission functional and $d_{f}$ is the Hausdorff dimension of the
fractional transmission system. As pointed in \cite{LiangWangZhang},  the essence of \eqref{2} is a conservation equation containing an explicit reference to the history of the diffusion process of the
option price on a fractal structure. We further assume that the diffusion sets are underlying  fractals and the transmission function
$H(t) = \frac{ A_{\alpha} } {    \Gamma (1-\alpha) t^{  \alpha } }$, where
$A_{\alpha}$ and $\alpha$ are constants and $\aaaa$ is the transmission exponent. Now, by differentiating   (2) with respect to $t$, we obtain
\begin{equation}
{\overline{Y}} (s,t) =s^{d_{f} - 1}\frac{d}{dt}\int_{t}^{T}H (t'-t)[V(s,t')-V(s,T)] dt'.
\end{equation}
\noindent On the other hand, from the BS equation, we have
$$
{\overline{Y}} (s,t) =\frac{1} {2}
\sigma^{2} s^{2}
\frac{  \partial^{2} V} {    \partial s^{2} }  +
(r-d) s\frac{  \partial V} {    \partial s } -r V ,
$$
\noindent which combined with (3), yields \cite{LiangWangZhang}
\begin{equation}\label{FM}
A_{\alpha}s^{d_{f} - 1}
\frac{\partial^{\alpha} V} {\partial t^{\alpha} } +
\frac{1}{2} \sigma^{2} s^{2}
\frac{\partial^{2} V}{\partial s^{2} }  +(r-d) s
\frac{\partial V}{\partial s } -r V =0,
\end{equation}
where $ \frac{\partial^{\alpha} V} {\partial t^{\alpha} } $ is the modified Riemann-Liouville derivative defined as
\begin{equation*}
\frac{  \partial^{\alpha} V} {\partial t^{\alpha} } =
\frac{1}{\Gamma (n-\alpha) }
\frac{   \partial^{n} }{  \partial t^{n} }
\int_{t}^{T}\frac{V(s,t')-V(s,T)}{(t'-t)^{1+\aaaa-n}}dt'  \quad
 \mbox{for} \quad
 n-1 \leq \alpha < n.
\end{equation*}
When $\alpha =1$ and under natural conditions for the function $V(s,t)$ the modified Riemann-Liouville derivative $\frac{  \partial^{\alpha} V} {\partial t^{\alpha} }$ is equal to the  partial derivative
$\frac{  \partial V} {\partial t }$ and  $\frac{  \partial^{\alpha} V} {\partial t^{\alpha} }$ is equal to the  Caputo derivative, when $0<\aaaa<1$ \cite{ChenXuZhu}
\begin{equation*}
\frac{  \partial^{\alpha} V} {\partial t^{\alpha} } =
\frac{1}{\Gamma (1-\alpha) }
\frac{   \partial }{  \partial t }
\int_{t}^{T}\frac{V(s,t')-V(s,T)}{(t'-t)^{\aaaa}}dt' =
\frac{1}{\Gamma (1-\alpha) }
\int_{t}^{T}\frac{V_t(s,t')}{(t'-t)^{\aaaa}}dt'.
\end{equation*}
Therefore equation (4) transforms to   (1)  when $A_{\alpha}=d_{f} =1$ and $\aaaa=1$. For consistency with
 the benchmark Black-Scholes model, following \cite{LiangWangZhang}, we assume that $A_{\alpha}=d_{f} =1$. In fact, the compact difference approximation \eqref{DA} described below, can be easily extended to other values of $A_{\alpha}$ and $d_{f}$.

In the last decade, a great deal of effort has been devoted to developing high-order compact schemes, which utilize the grid nodes directly adjacent to the central nodes.   Three-point compact finite-difference schemes on  uniform  spacial meshes for the time-fractional advection-diffusion equation are constructed in \cite{GaoSun}. The non-uniform meshes improve the efficiency of the numerical solutions of equation (4), which has a second order degeneration at $s=0$  \cite{TavellaRandall,HaentjensHout,SweilamRizkAbouHasan,BodeauRibouletRoncalli}. The goal of the present paper is to construct a high-order three-point compact finite-difference scheme for the time-fractional Black-Scholes (TFBS) equation \eqref{TFBS} and the time-fractional Black-Scholes  equation \eqref{TFBSD} in diffusion form (TFBSD) on a spacial non-uniform mesh. The outline of the paper is as follows. In section 2, we introduce and analyze a fourth-order compact approximation \eqref{CA} for the
second derivative on a non-uniform mesh.   In  section 3 we use approximation \eqref{CA} to construct a compact finite-difference scheme for the TFBSD equation on  special non-uniform meshes used in finance and we present the results of the numerical experiments for  test examples.
\section{Compact approximation on a non-uniform mesh}
Non-uniform grids are frequently used for numerical solution of differential equations, especially for equations with singular solutions, in order to improve the accuracy of the numerical method. The most commonly used grid in finance is the Tavella-Randall grid, which resolves the effect of the singularity of the initial condition of the BS equation at the striking price $s=K$. Let  $\varphi (x) $ be an increasing function on the interval $[0,1]$  with values $\vvvv(0)=s^{-}$ and $\vvvv(1)=s^{+}$.
Denote by  $\mathcal{M}^N=\llll\{x_n=n h\rrrr\}_{n=0}^N$ the uniform net on the interval $[0,1]$, where $h=1/N$ and $N$ is a positive integer.
We use the function $\vvvv$ to define non-uniform meshes $\mathcal{M}^N_\vvvv$ on the interval $\llll[s^{-},s^{+}\rrrr]$ by
$$\mathcal{M}^N_\vvvv=\llll\{ s_n=\vvvv (x_n) | n=0,1,\cdots, N   \rrrr\}.
$$
The mesh $\mathcal{M}^N_\vvvv$ has non-uniform  mesh steps  $ h_{n} = s_{n+1} - s_{n}$. When the function $\vvvv$ is a differentiable function with a bounded first derivative we determine a bound on the mesh steps, using the mean value theorem
$$h_n=s_{n+1}-s_n=\vvvv(x_{n+1})-\vvvv (x_n)=h \vvvv'(y_n),$$
where $y_n\in\llll(x_n,x_{n+1}\rrrr)$. The maximal length of the subintervals of the mesh $\mathcal{M}^N_\vvvv$ is bounded by the maximal value of the first derivative of the function $\vvvv$
$$h_n\leq \llll(\max_{x\in [0,1]} \vvvv'(x) \rrrr)h.$$

\subsection{Non-Uniform Grids in Finance}
The Black-Scholes equation is an important equation for practical applications and its numerical and analytical solution is an active research topic. The computation of the numerical solution of the BS equation is an interesting problem  because of the singularities of the equation and its non-smooth initial condition. Non-uniform grids for numerical solution of the BS equation are used \cite{TavellaRandall,HaentjensHout,SweilamRizkAbouHasan,BodeauRibouletRoncalli,WangZhangFang}, in order to overcome the deficiencies of the numerical solutions  at the points $s=0$ and $s=K$. In this paper we discuss a fourth-order accurate three-point compact difference approximation for the TFBSD equation on the Tavella-Randall and the quadratic non-uniform grids.

$\bullet$ {\it Tavella-Randall non-uniform grid $\mathcal{M}^{TR}_{\llllll,K}$ on the interval $\llll[s^{-},s^{+}\rrrr]$}
$$s_n=s^*+\llllll \sinh \llll( c_1\llll(1- \dddd{n}{N}\rrrr)+c_2 \dddd{n}{N}
\rrrr),\qquad \text{where}\quad s^{-}<s^*<s^{+},$$
$$ c_1=\sinh^{-1}\llll(\dddd{s^{-}-s^{*}}{\llllll}\rrrr),\quad c_2=\sinh^{-1}\llll(\dddd{s^{+}-s^{*}}{\llllll}\rrrr).$$
The parameter $\llllll$  determines the uniformity of the grid.

$\bullet$ {\it Quadratic non-uniform grid $\mathcal{M}^Q$ on the interval $\llll[s^{-},s^{+}\rrrr]$}
$$s_n=s^{-}+\llll( \frac{n}{N} \rrrr)^2 \llll(s^{+}-s^{-}  \rrrr),\quad h_n=\llll( \frac{2n+1}{N^2} \rrrr) \llll(s^{+}-s^{-}  \rrrr).$$
The Tavella-Randal and the quadratic meshes on the interval $[0,S]$ are defined with the functions
$$ \vvvv_Q(x)=S x^2, \quad \vvvv_{\llllll,K}(x)=K+\llllll \sinh \llll(c x+c_1\rrrr),$$
where
$$ c=\sinh^{-1}\llll(\dddd{S-K}{\llllll}\rrrr)+\sinh^{-1}\llll(\dddd{K}{\llllll}\rrrr),\quad
c_1=-\sinh^{-1}\llll(\dddd{K}{\llllll}\rrrr).
$$
\begin{figure}[h]
  \includegraphics{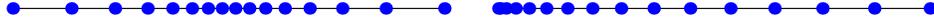}
 \caption{Graphs of the Tavella-Randall grid $\mathcal{M}^{TR}_4$ and the quadratic grid $\mathcal{M}^{Q}$ on the interval $[0,40]$ with $K=20$ and $N=15$ .}
\end{figure}

 The hyperbolic sine function is an odd increasing function with bounded first and second derivatives on the interval $[0,S]$. The inverse hyperbolic sine function is expressed with the natural logarithm function
$$\sinh^{-1} x = \ln \llll(x+\sqrt{x^2+1} \rrrr).$$
The function $\vvvv_{\llllll,K}(x)$ is equal to $K$ when $x=-c_1/c$ and   $\vvvv^\prime_{\llllll,K}(-c_1/c)=\llllll c$. The length of the smallest interval of the Tavella-Randall mesh is approximately
$\llllll c h=\llllll  h \llll(\sinh^{-1}\llll(\dddd{S-K}{\llllll}\rrrr)+\sinh^{-1}\llll(\dddd{K}{\llllll}\rrrr) \rrrr),$
$$\llllll c h=\llllll  h \llll(\ln\llll(\dddd{S-K}{\llllll}+\sqrt{1 +\llll(\dddd{S-K}{\llllll}\rrrr)^2}\rrrr)+\ln \llll(\dddd{K}{\llllll}+\sqrt{1 +\llll(\dddd{K}{\llllll}\rrrr)^2}\rrrr) \rrrr).$$
From the binomial and Taylor expansion formulas for $(1+y)^{1/2}$ and $\ln (1+y)$ we have

$\bullet$ When $\llllll$ is large
$$\llllll c h\approx \llllll  h \llll(\ln\llll(1+\dddd{S-K}{\llllll}\rrrr)+\ln \llll(1+\dddd{K}{\llllll}\rrrr) \rrrr)\approx \llllll h\llll(\dddd{S-K}{\llllll}+\dddd{K}{\llllll}\rrrr)=S h.$$

$\bullet$ When $\llllll$ is small
$$\llllll c h\approx \llllll  h \llll(\ln\dddd{2(S-K)}{\llllll}+\ln \dddd{2K}{\llllll}\rrrr)\approx \llllll h\llll( 4K(S-K)-2\ln \llllll\rrrr).$$
The function $-\llllll \ln \llllll\rightarrow 0$ when $\llllll\rightarrow 0$.  When $\llllll$ is large the Tavella-Randall mesh is almost uniform and when $\llllll$ is small the  mesh is highly non-uniform.

\subsection{Fourth-order compact approximation}
The central difference approximation for the second derivative has a second order accuracy on a uniform mesh. From the Taylor's expansion formula we can determine a second order accurate approximation \cite{SweilamRizkAbouHasan} for the second derivative on a three-point stencil of a non-uniform mesh which satisfies the conditions of Lemma 1. Now we determine a  compact approximation for the second derivative on a non-uniform mesh in the following form
\begin{equation}\label{CA}
d_{n}f_{n-1}'' + f_{n}'' +e_{n} f_{n+1}'' =a_{n} f_{n-1} +b_{n} f_{n} + c_{n}
f_{n+1}+E_n.
\end{equation}
From Taylor expansion at the point $s_n$, and setting the coefficients of $f_n,f'_n,f''_n,f'''_n$ and $f^{(4)}_n$ equal to zero we obtain a system of equations for the coefficients $a_n,b_n,c_n,d_n,e_n$,
\begin{equation*}
 \left\{\begin{aligned}
&a_n+b_n+c_n=0, \quad c_n h_n -a_n h_{n-1} =0,\quad \frac{a_n h_{n-1}^2+c_n h_n^2}{2}-d_n-e_n-1=0, \\
&\frac{c_n h_n^3-a_n h_{n-1}^3}{6}+d_n h_{n-1}-e_n h_n =0 \quad\frac{a_n h_{n-1}^4+c_n h_n^4}{24}-\frac{d_n h_{n-1}^2+e_n h_n^2}{2}=0.\\
       \end{aligned}
 \right.
\end{equation*}
Let $D_n=h_{n-1}^2+3  h_{n-1} h_n+h_n^2$. The system of equations  has the solution
$$a_n= \frac{12 h_n }{(h_{n-1}+h_n) D_n},\quad b_n= -\frac{12}{D_n},\quad
c_n= \frac{12 h_{n-1}}{(h_{n-1}+h_n) D_n},$$
$$d_n= \frac{h_n  \left(h_{n-1}^2+h_n h_{n-1}-h_n^2\right)}{(h_{n-1}+h_n)D_n},\quad e_n= \frac{h\llll(h_{n-1}^2+3 h_n +3 h_n  h_{n-1}+h_n^2  \rrrr)}{(h_{n-1}+h_n) D_n}.$$
 In the next lemma we show that approximation \eqref{CA} has a fourth-order accuracy on the class of non-uniform meshes determined by the functions $\vvvv\in C^2[0,1]$.
\begin{lem} Let $\vvvv$ be an increasing differentiable function on the interval $[0,1]$, with bounded first and second derivatives. Then the
compact approximation \eqref{CA} has fourth order accuracy on the non-uniform mesh $\mathcal{M}_\vvvv$.
\end{lem}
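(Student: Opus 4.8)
The plan is to obtain the truncation error $E_n$ of \eqref{CA} explicitly by Taylor expanding every term about the central node $s_n$, and then to bound it by $O(h^4)$ using a regularity estimate for the steps that comes from $\vvvv\in C^2$. Writing $p=h_{n-1}=s_n-s_{n-1}$ and $q=h_n=s_{n+1}-s_n$, I would first expand $f_{n-1}=f(s_n-p)$, $f_{n+1}=f(s_n+q)$ and $f''_{n\pm1}$ for a sufficiently smooth $f$. By construction the five displayed equations annihilate the coefficients of $f_n,f'_n,f''_n,f'''_n$ and $f^{(4)}_n$ in $E_n$; equivalently, \eqref{CA} is exact for every polynomial of degree at most four, so the residual begins at the fifth derivative,
\[
E_n = C_5 f^{(5)}_n + C_6 f^{(6)}_n + \sum_{m\ge 7} C_m f^{(m)}_n ,
\]
with each $C_m$ a fixed combination of $a_n,\dots,e_n$ and the two steps.

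Next I would simplify the leading coefficients after inserting the explicit formulas for $a_n,\dots,e_n$. A short computation gives
\[
C_5 = \frac{pq\,(q-p)\,(2p^2+5pq+2q^2)}{30\,D_n},\qquad D_n = p^2+3pq+q^2 .
\]
The factor $q-p=h_n-h_{n-1}$ is forced by symmetry rather than luck: swapping $p\leftrightarrow q$ reflects the three-point stencil and changes the sign of every odd-order remainder, so $C_5$ is antisymmetric and must vanish on the uniform stencil $p=q$. A parallel (but longer) reduction shows that $C_6$ is a quotient whose numerator is homogeneous of degree six and whose denominator $(p+q)D_n$ is homogeneous of degree three, and more generally $C_m$ is homogeneous of degree $m-2$ in $(p,q)$ for $m\ge 6$.

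The core of the argument is to convert these homogeneity facts into genuine powers of $h$. Since $s_n=\vvvv(x_n)$ on the uniform net $x_n=nh$, the steps satisfy $p,q=O(h)$ (from the bound $h_n=h\,\vvvv'(y_n)$ already noted), while the second difference
\[
h_n-h_{n-1}=\vvvv(x_n+h)-2\vvvv(x_n)+\vvvv(x_n-h)=h^2\,\vvvv''(\eta_n)
\]
for some $\eta_n$, so $|h_n-h_{n-1}|\le h^2\max_{[0,1]}|\vvvv''|=O(h^2)$; this is precisely where the hypothesis that $\vvvv''$ is bounded enters. Because $C_5/(q-p)$ and $C_6$ are quotients of homogeneous polynomials with strictly positive denominators, each is bounded by a constant multiple of $(p+q)^2$ and $(p+q)^4$ respectively, uniformly in $n$ even where the mesh degenerates. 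Hence $C_5=O(h^2)(q-p)=O(h^4)$, $C_6=O(h^4)$ and $C_m=O(h^{m-2})$ for $m\ge 7$; together with the boundedness of the derivatives of $f$ this yields $E_n=O(h^4)$, the claimed fourth-order accuracy.

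I expect the difficulty to be concentrated in two places. The first is the algebraic reduction of $C_5$ and $C_6$ to the factored and homogeneous forms above, which must be done carefully because the raw coefficients $a_n,b_n,c_n$ are only $O(h^{-2})$, so a premature estimate would destroy the cancellation encoded in the factor $q-p$. The second, more conceptual, is keeping the order estimate uniform in $n$ on meshes such as the quadratic grid $\vvvv_Q(x)=Sx^2$, where $\vvvv'(0)=0$ and hence $D_n\asymp h^2$ fails near the left boundary; there one cannot bound the steps from below and must instead rely on the homogeneity of the $C_m$, as above, to retain the $O(h^4)$ bound.
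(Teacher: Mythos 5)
Your proof is correct and follows essentially the same route as the paper's: Taylor expansion about $s_n$, explicit factorization of the leading error coefficient $C_5=\frac{h_{n-1}h_n\,(h_n-h_{n-1})\,(2h_{n-1}^2+5h_{n-1}h_n+2h_n^2)}{30\,(h_{n-1}^2+3h_{n-1}h_n+h_n^2)}$ (your polynomial $2p^2+5pq+2q^2$ in fact corrects an apparent typo in the paper, which drops the coefficient $2$ on $h_n^2$; the final bound is unaffected), followed by the mean-value estimates $h_{n-1},h_n\le h\max\varphi'$ and $|h_n-h_{n-1}|=h^2|\varphi''(\eta_n)|\le h^2\max|\varphi''|$, which is exactly where the paper also uses the boundedness of $\varphi''$. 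Your extra care with the higher-order terms $C_m$, $m\ge 6$ (which the paper silently absorbs into the $f^{(5)}$ term) and with uniformity where the mesh degenerates is a refinement of, not a departure from, the paper's argument.
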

\begin{proof}From Taylor's formula the error $E_n$ of approximation \eqref{CA} is given by
$$E_n=\llll(\dddd{1}{6}\llll(d_n h_{n-1}^3 - e_n h_n^3 \rrrr)+\dddd{1}{120}\llll(c_n h_{n}^5 - a_n h_{n-1}^5 \rrrr)\rrrr)f^{(5)}_n.$$
Then
$$|E_n|=\llll|\dddd{h_{n-1}h_n\llll(h_{n-1}-h_n\rrrr)\llll(2h_{n-1}^2+5  h_{n-1} h_n+h_n^2\rrrr)}{h_{n-1}^2+3  h_{n-1} h_n+h_n^2}\dddd{f^{(5)}_n}{30}\rrrr|<h_{n-1}h_n\llll|h_{n-1}-h_n\rrrr|\dddd{\llll| f^{(5)}_n\rrrr|}{15},$$
$$|E_n|<\llll(s_n-s_{n-1}\rrrr)\llll(s_{n+1}-s_{n}\rrrr)\llll|s_{n+1}-2s_{n}+s_{n-1}\rrrr|\llll| f^{(5)}_n\rrrr|/15,$$
$$|E_n|<\llll(\vvvv\llll(x_n\rrrr)-\vvvv\llll(x_{n-1}\rrrr)\rrrr)\llll(\vvvv\llll(x_{n+1}\rrrr)-\vvvv\llll(x_{n}\rrrr)\rrrr)\llll|\vvvv\llll(x_{n+1}\rrrr)-2\vvvv\llll(x_{n}\rrrr)+\vvvv\llll(x_{n-1}\rrrr)\rrrr|\llll| f^{(5)}_n\rrrr|/15.$$
By the mean-value theorem there exist  $y_n\in(s_{n-1},s_n),z_n\in(s_n,s_{n+1})$ and $w_n\in(s_{n-1},s_{n+1})$ such that
$$|E_n|<\vvvv'\llll(y_n\rrrr)\vvvv'\llll(z_n\rrrr)\llll|\vvvv''\llll(w_n\rrrr) \rrrr|\dddd{\llll| f^{(5)}_n\rrrr|}{15}h^4<\dddd{1}{15}\llll(\max_{x\in [0,1]} \vvvv'(x) \rrrr)^2\llll(\max_{x\in [0,1]} \llll|\vvvv''(x)\rrrr| \rrrr)\llll(\max_{x\in [0,1]} \llll|f^{(5)}(x)\rrrr| \rrrr)h^4.$$
\end{proof}
The Tavella-Randall and the quadratic meshes are determined by the functions $ \vvvv_Q(x)=S x^2$ and $\vvvv_{\llllll,K}(x)=K+\llllll \sinh \llll(c x+c_1\rrrr)$. The two functions satisfy the requirements of Lemma 1. Therefore compact approximation \eqref{CA} has a fourth-order accuracy on the Tavella-Randall and the quadratic meshes. The requirements of Lemma 1 for the function $\vvvv$ are sufficiently general and include most of the non-uniform meshes used for numerical solution of differential equations.
\section{Compact finite-difference scheme for the time-fractional Black-Scholes equation}
In section 1 we outlined the main steps in the derivation of the fractional model for option prices \eqref{FM}. A detailed discussion of the model is given in \cite{LiangWangZhang}. The time-fractional Black-Scholes equation for European option prices is a special case of \eqref{FM} with $A_f=d_f=1$.
\begin{equation}\label{TFBS}
\frac{\partial^{\alpha} V} {\partial t^{\alpha} } +
\frac{1}{2} \sigma^{2} s^{2}
\frac{\partial^{2} V}{\partial s^{2} }  +(r-d) s
\frac{\partial V}{\partial s } -r V =0,
\end{equation}
In this section we determine a compact difference approximation for the TFBS equation for  European options  with payoff (final condition)
$V(s,T) = V^{\star} (s) =\max \{ K-s,0 \}$,
 where $K$ is the striking price. Additionally we prescribe Dirichlet boundary conditions $ V(0,t)=K, V(S,t) =0 $ on the bounded domain
$\Omega = [0,S]\times[0,T]$, where $0<K<S$.
For convenience  of the numerical construction, first we transform \eqref{TFBS} into an equivalent standard form
satisfying homogeneous Dirichlet boundary conditions.
Substitute
$$t:=T-t,\quad W(s,t):=V(s,t)+\dddd{K}{S}\llll(s-S\rrrr).$$
The function $W$ is a solution of the TFBS equation
	\begin{equation*}\label{W}
	\left\{
	\begin{array}{l}
\displaystyle{\dddd{\partial^{\alpha} W} {\partial t^{\alpha}}=\dddd{\sigma^{2}} {2} s^{2}
\dddd{  \partial^{2} W} {\partial s^{2}}+(r-d) s \dddd{\partial W} {    \partial s} -r W-\dddd{dK}{S}s+r K,}\\
\displaystyle{W(0,t)=W(S,t) =0, \; W(s,0)=W^{*}(s) =\max \{ K-s,0 \}+\dddd{K}{S}\llll(s-S\rrrr)}.
	\end{array}
		\right .
	\end{equation*}
In order to apply compact approximation \eqref{CA}, it is convenient to eliminate the convection term by substituting
$$
U(s,t) = s^{q}W(s,t), \quad\text{ where }\quad q= \dddd{r-d}{\ssss^2}.
$$
The function $U(S,t)$ is a solution of the TFBSD equation
\begin{equation}\label{TFBSD}
	\left\{
	\begin{array}{l}
\displaystyle{\frac{  \partial^{\alpha} U} {\partial t^{\alpha} }=
A s^{2}\dddd{\pppp^2 U}{\pppp s^2}+B U+F(s,t),}\\
\displaystyle{U(0,t)=U(s,t) =0, \;
U(s,0)=U^{*}(s) =s^{q}\llll(\max \{ K-s,0 \}+\dddd{K}{S}\llll(s-S\rrrr)
\rrrr)},
	\end{array}
		\right .
	\end{equation}
where
\begin{equation}\label{CS}
A=\dddd{\ssss^2}{2},\quad B=-\dddd{1}{2}\llll(r+d+q^2\ssss^2\rrrr),\quad
F(s,t)=-s^q \llll(\dddd{dK}{S}s-r K \rrrr).
\end{equation}
In the next  section  we construct a compact finite-difference scheme for the TFBSD equation using the fourth-order compact approximation \eqref{CA} on a three-point stencil of the Tavella-Randal and the quadratic non-uniform meshes and the $L1$-approximation for the Caputo fractional derivative defined as \cite{GaoSun}
$$\frac{\partial^{\alpha} U_n^m} {\partial t^{\alpha} }=\dddd{1}{\GGGG(2-\aaaa)\tttt^{\aaaa}}\sum_{k=0}^m \ssss_k^{(\aaaa)} U_n^{m-k}+E_n^m,
$$
where
$$\ssss_0^{(\aaaa)}=1,\quad \ssss_k^{(\aaaa)}=(k-1)^{1-\aaaa}-2k^{1-\aaaa}+(k+1)^{1-\aaaa},\quad \ssss_m^{(\aaaa)}=(m-1)^{1-\aaaa}-m^{1-\aaaa}.
$$
The $L1$-approximation has accuracy $O\llll(\tttt^{2-\aaaa} \rrrr)$ when $U(s,t)$ is a twice continuously differentiable function \cite{ChenXuZhu,GaoSun}. From the properties of the Caputo derivative, the TFBSD equation has a natural singularity at $t=0$.
The existence of a partial derivative of order $\aaaa$, where $0<\aaaa<1$ does not guarantee  that the integer-order partial derivatives of the function are continuous and bounded on the interval $[0,1]$. An important approach for analytical and numerical solution of linear and non-linear fractional differential equations is to use fractional power series.
In \cite{Dimitrov} we construct finite-difference schemes for the fractional sub-diffusion equation using the $L1$ and the modified $L1$-approximations for the Caputo derivative. In all numerical experiments the difference approximations have first order accuracy in the time direction. The same pattern is observed in the numerical solution of the TFBSD equation. The numerical test examples in Table 1 and Table 2 confirm that compact  difference approximation \eqref{DA} for the TFBSD equation has accuracy $O\llll(h^4+\tttt \rrrr)$.
\subsection{Compact difference approximation}
The form of the TFBSD equation is suitable  for using the fourth-order compact approximation \eqref{CA} on a non-uniform grid. Now  we construct a three-point compact finite-difference scheme for the TFBSD  on the non-uniform grid $\mathcal{G}_\vvvv$ of the rectangle $[0,S]\times [0,T]$ defines as
$$\mathcal{G}_\vvvv=\llll\{\llll(s_n,t_m\rrrr)|n=0,1,\cdots,N;m=0,1,\cdots,M\rrrr\},$$
where the points $s_n$ belong to a non-uniform mesh $\mathcal{M}_\vvvv$ of the interval $[0,S]$ and $t_m=m\tttt$, where $\tttt=T/M$ and $M$ is a positive integer. The finite-difference scheme uses the $L1$-approximation for the Caputo derivative and compact approximation \eqref{CA} for the second derivative. In the next section we compute the numerical solution of the TFBSD  equation  on the Tavella-Randall and the quadratic non-uniform grids.  By multiplying the TFBSD  equation by $1/s^2$ we obtain
$$\dddd{1}{s^2}\frac{  \partial^{\alpha} U} {\partial t^{\alpha} }=
A\dddd{\pppp^2U}{\pppp s^2}+\dddd{B}{s^2} U+H(s,t),\quad\text{where}\quad H(s,t)=\dddd{F(s,t)}{s^2}.$$
The function $U$ satisfies the following equations on a three-point stencil of the non-uniform grid $\mathcal{G}_\vvvv$,
\begin{align*}
\dddd{d_n}{s_{n-1}^2}\frac{\partial^{\alpha} U_{n-1}^m} {\partial t^{\alpha} }&=
A d_n \dddd{\pppp^2U_{n-1}^m}{\pppp s^2}+\dddd{B d_n}{s_{n-1}^2}U_{n-1}^m+d_n H_{n-1}^m,\\
\dddd{1}{s_{n}^2}\frac{\partial^{\alpha} U_{n}^m} {\partial t^{\alpha} }&=
A  \dddd{\pppp^2 U_{n}^m}{\pppp s^2}+\dddd{B}{s_{n}^2}U_{n}^m+H_{n}^m,\\
\dddd{e_n}{s_{n+1}^2}\frac{\partial^{\alpha} U_{n+1}^m} {\partial t^{\alpha} }&=
A e_n \dddd{\pppp^2 U_{n+1}^m}{\pppp s^2}+\dddd{B e_n}{s_{n+1}^2}U_{n+1}^m+e_n H_{n+1}^m.
\end{align*}
By adding the equations we obtain
$$\mathcal{C}_n^m=A\llll(d_n \dddd{\pppp^2 U_{n-1}^m}{\pppp s^2}+\dddd{\pppp^2U_{n}^m}{\pppp s^2}+e_n \dddd{\pppp^2U_{n+1}^m}{\pppp s^2}\rrrr)+B\llll(\dddd{d_n}{s_{n-1}^2}U_{n-1}^m+\dddd{1}{s_{n}^2}U_{n}^m +\dddd{e_n}{s_{n+1}^2}U_{n+1}^m \rrrr)+\mathcal{H}_n^m,$$
where
$$\mathcal{C}_n^m=\dddd{d_n}{s_{n-1}^2}\frac{\partial^{\alpha} U_{n-1}^m} {\partial t^{\alpha} }+\dddd{1}{s_{n}^2}\frac{\partial^{\alpha} U_{n}^m} {\partial t^{\alpha} }+\dddd{e_n}{s_{n+1}^2}\frac{\partial^{\alpha} U_{n+1}^m} {\partial t^{\alpha} },\quad
\mathcal{H}_n^m=d_n H_{n-1}^m+H_n^m+e_n H_{n+1}^m.$$
From compact approximation \eqref{CA}
$$\mathcal{C}_n^m=\llll(A a_n + \dddd{B d_n}{s_{n-1}^2}\rrrr)U_{n-1}^m+\left(A b_n+\frac{B}{s_n^2}\right)U_{n}^m +\left(A c_n+\frac{B e_n}{s_{n+1}^2}\right)U_{n+1}^m +\mathcal{H}_n^m+ O\llll(h^4 \rrrr).$$
By approximating the fractional derivative using the $L1$-approximation we obtain the systems of linear equations for the numerical solution of the TFBSD equation
\begin{equation}\label{DA}
Q \mathcal{U}^m=\mathcal{R}^m.
\end{equation}
where  $Q=\llll( q_{i,j}\rrrr)$ is a tridiagonal $(N-1)\times (N-1)$ matrix with elements
$$q_{n,n-1} = \dddd{d_n}{s_{n-1}^2}
 - \GGGG(2-\aaaa)\tttt^\aaaa\llll(A a_n + \dddd{B d_n}{s_{n-1}^2}\rrrr),\quad
q_{n,n}=\frac{1}{s_i^2}-\GGGG(2-\aaaa)  \tau ^{\alpha } \left(A b_n+\frac{B}{s_n^2}\right),$$
$$q_{n,n+1}=\frac{e_n}{s_{n+1}^2}-\GGGG(2-\aaaa)  \tau ^{\alpha } \left(A c_n+\frac{B e_n}{s_{n+1}^2}\right).$$
The right-hand side $\mathcal{R}^m=\llll( r^m_n\rrrr)$ of \eqref{DA} is an $(N-1)$-dimensional vector  with elements
\begin{align*}
r_n^m=\GGGG(2-\aaaa)& \tau ^{\alpha } \llll(d_n H(s_{n-1},t_m )+H(s_n,t_m )+e_n H(s_{n+1},t_m )\rrrr)\\
&-\frac{d_n}{s_{n-1}^2} \sum _{k=1}^m \sigma_k^{(\aaaa)} U^{m-k}_{n-1}-\frac{1}{s_n^2}\sum _{k=1}^m \sigma_k^{(\aaaa)} U^{m-k}_{n}-\frac{e_n}{s_{n+1}^2}\sum _{k=1}^m \sigma_k^{(\aaaa)} U^{m-k}_{n+1}.
\end{align*}
\subsection{Numerical experiments}
In he beginning of this section we showed that the TFBS equation for European option prices transforms to the TFBSD equation, where the coefficients $A$ and $B$ and the function $U(s,0)=U^*(s)$ are given by \eqref{CS}.
\begin{equation*}\label{W}
	\left\{
	\begin{array}{l}
\displaystyle{\frac{\partial^{\alpha} U} {\partial t^{\alpha} }=
A s^{2}\dddd{\pppp^2 U}{\pppp s^2}+B U+F(s,t),}\\
\displaystyle{U(0,t)=U(s,t)=0,\quad U(s,0)=U^*(s).}
	\end{array}
		\right .
	\end{equation*}
The TFBSD equation has a differentiable solution $U(s,t)=\left(1+2t+3t^2\right)\sin(\pi s)$ when
$$F(s,t)=\left(\frac{2t^{1-\alpha }}{\GGGG(2-\alpha)}+\frac{6t^{2-\alpha }}{\GGGG(3-\alpha)}\right) \sin (\pi s)+\left(A\pi ^2 s^2-B\right)\sin (\pi s)\left(1+2t+3t^2\right),$$
and
$$U(0,t)=U(1,t)=0,\;U^*(s)=\sin (\pi  s).$$
In the second columns of Table 1 and Table 2 we compute the orders of compact difference approximation \eqref{DA} with the above function $F(s,t)$ and  initial and boundary conditions on the quadratic and Tavella-Randall non-uniform grids  for  $\aaaa=0.75$ and $\aaaa=0.9$ and $A=1,B=2$ when $S=1$. The orders of numerical solution \eqref{DA}  of the TFBSD equation in the time and space directions when   $\sigma =0.1,r=0.08,d=0.025$ are given in the third columns of Table 1 and Table 2. The orders are computed by fixing one of the numbers $M=50$ and $N=50$ and computing the order of the numerical solution by doubling the value of the other number. In Figure 2 we compute the numerical solutions of the TFBS equation for European put options from the numerical solution of the TFBSD equation and the inverse transformations discussed in this section.
				\begin{table}[ht]
    \caption{Time and space orders of compact difference approximation \eqref{DA} for the TFBSD equation on the quadratic non-uniform grid $\mathcal{G}^{Q}$  and $\aaaa=0.75$.  }
      \centering
  \begin{tabular}{l c c }
  \hline
    $M(N=50)$ & $\quad \tttt-Order$ & $\quad \tttt-Order$  \\
		\hline
$100$   & $\quad 1.23372$          & $\quad 1.08047$ \\
$200$   & $\quad 1.23764$          & $\quad 1.06480$ \\
$400$   & $\quad 1.24169$          & $\quad 1.05343$ \\
$800$   & $\quad 1.24532$          & $\quad 1.04482$ \\
$1600$  & $\quad 1.24856$          & $\quad 1.03796$ \\
\hline
  \end{tabular}
      \centering
				\quad\quad
  \begin{tabular}{ l  c  c }
    \hline
$N(M=50)$   &$\quad h-Order$     &$\quad h-Order$  \\ \hline
$100$       &$\quad 4.01927$     &$\quad 3.95788$\\
$200$       &$\quad 4.00327$     &$\quad 3.98584$\\
$400$       &$\quad 4.00073$     &$\quad 3.99621$\\
$800$       &$\quad 4.00017$     &$\quad 3.99904$\\
$1600$      &$\quad 3.96748$     &$\quad 3.99952$\\
\hline
  \end{tabular}
\end{table}
				\begin{table}[ht]
    \caption{Time and space orders of compact difference approximation \eqref{DA} for the TFBSD equation on the Tavella-Randall non-uniform grid $\mathcal{G}^{TR}_6$  and $\aaaa=0.9$.}
      \centering
  \begin{tabular}{l c c }
  \hline
    $M(N=50)$ & $\quad \tttt-Order$ & $\quad \tttt-Order$  \\
		\hline
$100$   & $\quad 1.11894$          & $\quad 1.03640$ \\
$200$   & $\quad 1.10714$          & $\quad 1.03668$ \\
$400$   & $\quad 1.10250$          & $\quad 1.04803$ \\
$800$   & $\quad 1.10073$          & $\quad 1.04516$ \\
$1600$  & $\quad 1.10016$          & $\quad 1.03958$ \\
\hline
  \end{tabular}
      \centering
				\quad\quad
  \begin{tabular}{ l  c  c }
    \hline
$N(M=50)$   &$\quad h-Order$       &$\quad h-Order$  \\ \hline
$100$       &$\quad 4.02583$     &$\quad 3.97396$\\
$200$       &$\quad 4.00461$     &$\quad 3.98621$\\
$400$       &$\quad 4.00085$     &$\quad 3.99502$\\
$800$       &$\quad 4.00027$     &$\quad 3.99917$\\
$1600$      &$\quad 4.00003$     &$\quad 3.99972$\\
\hline
  \end{tabular}
\end{table}
\vspace{0cm}
\begin{figure}[ht]
  \includegraphics[width=0.38\textwidth]{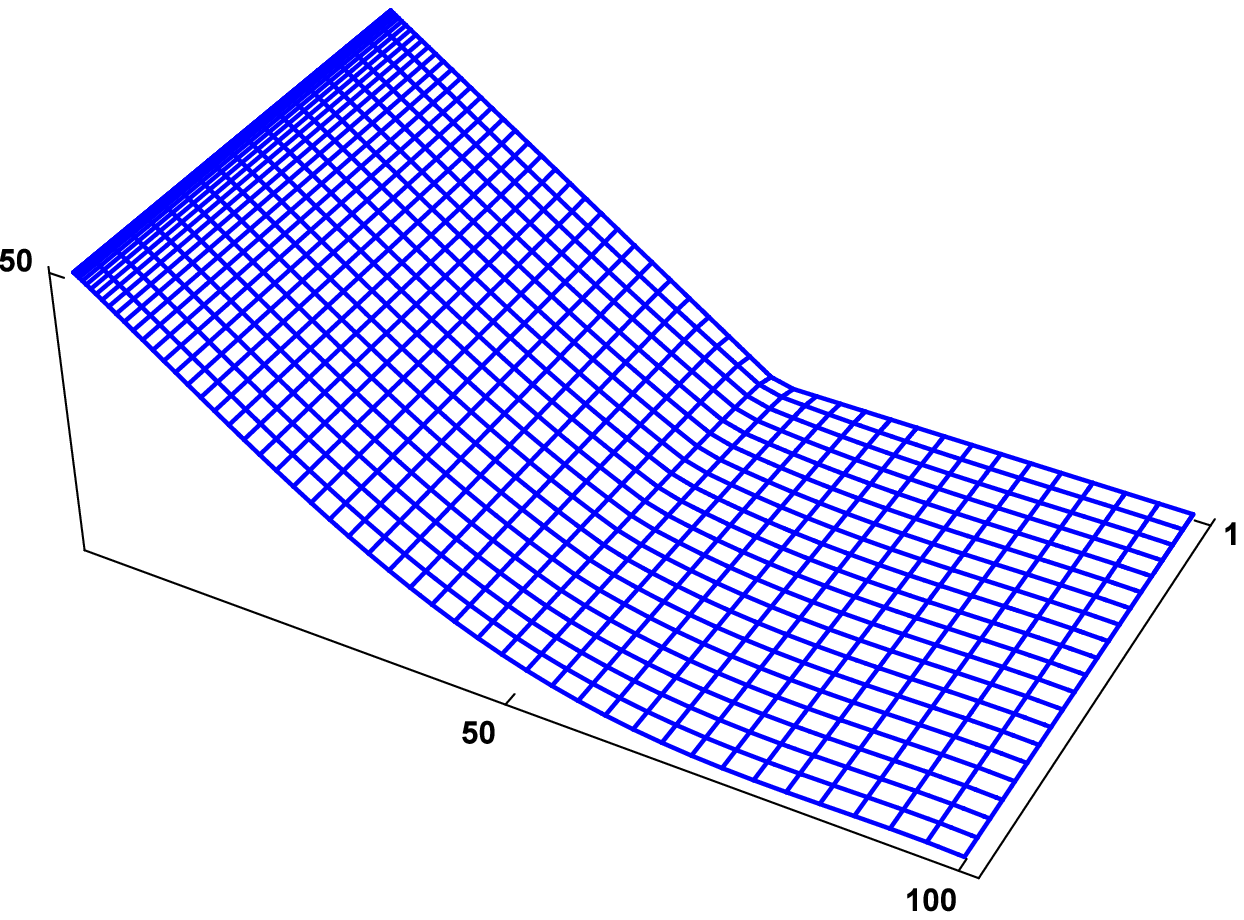}\hspace{1cm}
	  \includegraphics[width=0.38\textwidth]{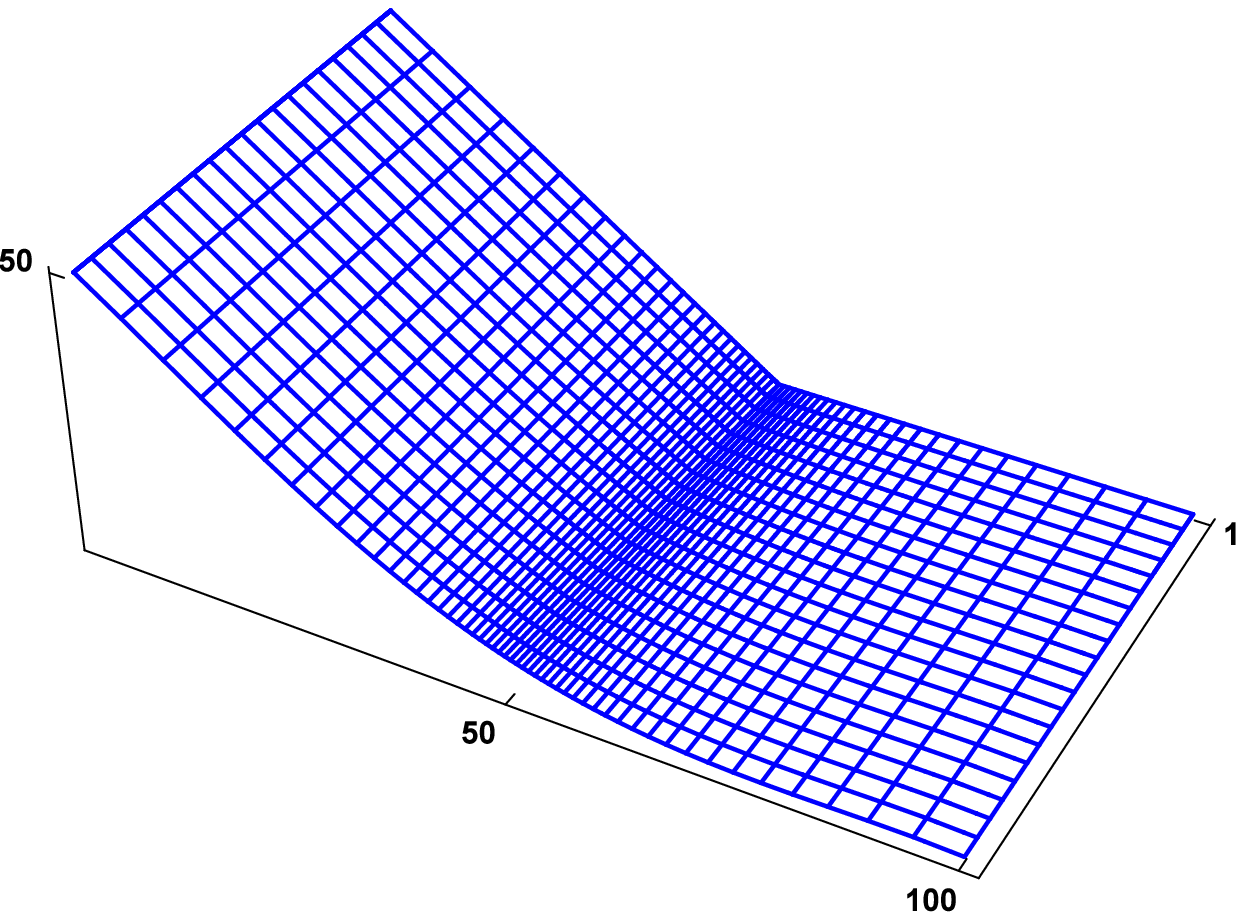}
 \caption{Numerical solutions of the TFBS equation for European put options  on the quadratic grid $\mathcal{G}^{Q}$ with $\aaaa=0.75$ and the Tavella-Randall grid $\mathcal{G}^{TR}_6$ with $\aaaa=0.9$  when $S=100,K=50$ and $N=50$.}
\end{figure}

When the TFBSD equation has a differentiable solution, the $L1$-approximation for the Caputo derivative has accuracy $O\llll(\tttt^{2-\aaaa} \rrrr)$. We can expect that for most functions $F(s,t)$ and initial condition $U(s,0)=U^*(s)$ the partial derivative $U_t(s,t)$ is unbounded at $t= 0$. This singularity of the TFBSD equation leads to a lower accuracy of the numerical solution in the time direction. The results of the numerical experiments presented in Table 1 and Table 2 are consistent with the expected  fourth-order accuracy in the space direction and first-order accuracy in the time direction of difference approximation \eqref{DA} for the TFBSD equation.
\section{Conclusions}
In the present paper we constructed a  compact difference approximation \eqref{DA} for the  TFBSD equation on a non-uniform spacial grid  which has a fourth-order accuracy in space. While the  accuracy of the numerical solution  is dominated by the accuracy in the time direction, difference approximation \eqref{DA} results in  a significant improvement in the computational time, since we use a much smaller number of subintervals in space. We discussed the numerical solution of the fractional model for European option prices when $A_f=d_f=1$. An important question for future work is to develop methods for numerical solution of the TFBS equation for other values of the parameters $A_f$ and $d_f$ and  accuracy in the time direction greater than $O(\tttt)$. In a forthcoming paper the convergence of the proposed method will be studied theoretically. Numerical solution of a new fractional nonlinear problem, corresponding to the integer Black-Scholes model, see e.g.\cite{K13, K14,K16} will be remained for our future consideration.
\section{Acknowledgments}
This research is supported by the European Union under Grant Agreement number 304617 (FP7 Marie Curie Action Project Multi-ITN STRIKE - Novel Methods in Computational Finance). The second author is also supported by Bulgarian National Fund of Science under Project I02/20-2014.


\begin{thebibliography}{99}


\bibitem{BlackScholes}F. Black, M. Scholes, The pricing of options and corporate liabilities, Journal of Political Economy 81 (1973)637--659.
\bibitem{BodeauRibouletRoncalli}J. Bodeau, G. Riboulet, T. Roncalli, Non uniform grids for PDE in finance, \text{L}ecture Notes, 2000.
\bibitem{Dimitrov}Y. Dimitrov, A New Method for Numerical Solution of the Fractional Relaxation and Subdiffusion Equations Using Fractional Taylor Polynomials, arXiv:1503.02958, 2015.
   
 \bibitem{WangZhangFang}S. Wang, S. Zhang, Z. Fang, A superconvergent fitted finite volume method for Black\text{-}Scholes Equations Governing European
and American Option Valuation, Num. Meth. for Partial Differential Equations 31(4) (2015) 1190 -- 1208.
  
   

\bibitem{TavellaRandall}D. Tavella, C. Randall, Pricing Financial Instruments-The Finite Diference Method, Wiley Series In Financial Engineering, John Wiley  Sons, NY, 2000.
   
   

\bibitem{SweilamRizkAbouHasan}N. H. Sweilam, M. M. Rizk,  M. M. Abou Hasan, Non-uniform finite difference method for
european and american put option using
Black-Scholes model, J. of Frac. Calc.  Appl. 5(3S) No. 17  (2014) 1--11.
   
\bibitem{HaentjensHout} T. Haentjens, K. J. in 't Hout, Alternating direction implicit finite difference schemes for the Heston-Hull-White partial differential equation, The Journal of Computational Finance 16(1) (2011) 83--110.
\bibitem{ChenXuZhu} W. Chen, X. Xu, S. Zhu, Analytically pricing double barrier options based
on a time-fractional Black-Scholes equation, Computers \& Mathematics with Applications 69(12) (2015)1407--1419.
\bibitem{K13}M.N. Koleva, Efficient numerical method for solving Cauchy problem for the Gamma equation, AIP CP 1410 (2011) 120--127.
	
	 
\bibitem{GaoSun}G.-H. Gao, H.-W. Sun, Three-point combined compact difference schemes for time-fractional advection-diffusion equations with smooth solutions, Journal of Computational Physics 298 (2015)520--538.
	
   

  \bibitem{K14}M.N. Koleva, Iterative methods for solving nonlinear parabolic problem in pension saving management, AIP CP 1404 (2011) 457--463.
   
 \bibitem{K16} M.N. Koleva, Positivity preserving numerical method for non-linear Black-Scholes models, LNCS 8236 (2013) 363--370.
	 
%
   \bibitem{LiangWangZhang} J. Liang, J. Wang, W. Zhang, The solutions to a bi-fractional Black-Scholes-
Merton differential equation,  International Journal of Pure and Applied Mathematics 58(1) (2010)99--112.


\end{thebibliography}
\end{document}